\newtheorem{theorem}{Theorem}
\newtheorem{lemma}[theorem]{Lemma}
\newtheorem{conjecture}[theorem]{Conjecture}
\theoremstyle{definition}
\newtheorem*{FencePatrollingProblem}{Interval Patrolling Problem}
\newtheorem*{CirclePatrollingProblem}{Circle Patrolling Problem}
\newtheorem*{PointPatrollingProblem}{Point Patrolling Problem}
\newtheorem*{DiscretizedPointPatrollingProblem}{Discretized Point Patrolling Problem}
\newtheorem*{DisjointCoveringSystemProblem}{Disjoint Covering System Problem}
\newtheorem*{DisjointResidueClassProblem}{Disjoint Residue Class Problem}
\newtheorem*{GeneralizedPointPatrollingProblem}{Generalized Point Patrolling Problem}
\newtheorem*{partition}{Partition-based strategy}
\newtheorem*{runners}{Runners strategy}
\newcommand{\Rset}{\mathbb R}
\newcommand{\Zset}{\mathbb Z}
\newcommand{\classNP}{\mathsf{NP}}
\begin{document}

\title{Simple strategies versus optimal schedules in multi-agent patrolling%
  \thanks{Accepted for publication in \emph{Theoretical Computer Science}~\cite{journal}.  A preliminary version of this paper was announced at the Ninth International Conference on Algorithms and Complexity (CIAC)~\cite{ciac}.}
  \thanks{This work was supported in part by the Asahi Glass Foundation, JSPS KAKENHI JP17K19960, and a joint research program of NTT and Kyushu University.}}
\author{Akitoshi Kawamura\and Makoto Soejima}
\date{}

\maketitle

\begin{abstract}
Suppose that a set of mobile agents, each with a predefined maximum speed,
want to patrol a fence together
so as to minimize the longest time interval
during which a point on the fence is left unvisited.
In 2011, Czyzowicz, G\k asieniec, Kosowski and Kranakis studied
this problem for the settings where the fence
is an interval (a line segment) and a circle,
and conjectured that
the following simple strategies are always optimal:
for Interval Patrolling,
the simple strategy
partitions the fence into subintervals,
one for each agent,
and lets each agent move back and forth in the assigned subinterval
with its maximum speed;
for Circle Patrolling,
the simple strategy is to choose a number $r$,
place the $r$ fastest agents equidistantly around the circle,
and move them at the speed of the $r$th agent.
Surprisingly, these conjectures were then proved false:
schedules were found (for some settings of maximum speeds)
that slightly outperform the simple strategies.

In this paper, we are interested in the ratio between the performances of
optimal schedules and simple strategies.
For the two problems,
we construct schedules that are
$4 / 3$ times (for Interval Patrolling) and
$21 / 20$ times (for Circle Patrolling) as good,
respectively, as the simple strategies.
We also propose a new variant,
in which we want to patrol a single point
under the constraint that
each agent can only visit the point
some predefined time after its previous visit.
We obtain some similar ratio bounds
and $\classNP$-hardness results related to this problem.
\end{abstract}

\section{Introduction}
\label{section: introduction}

In \emph{patrolling problems}, 
a set of mobile agents are deployed in order to
protect or supervise a given terrain,
and the goal is to leave no point unattended for a long period of time.
Recent studies~\cite{cgc59} have shown that
finding an optimal strategy is not at all straightforward,
even when the terrain is as simple as it could be.
We continue this line of research
in three basic settings:
patrolling a line segment, a circle, and a point.

\subsection{Interval patrolling}
\label{subsection: interval patrolling introduction}

In 2011, Czyzowicz et al.~\cite{esa2011}
proposed some simple variants of patrolling problems.
One of them was as follows\footnote{%
  In their problem, time is in $[0, \infty)$ instead of $\Rset$.
  This difference does not affect our analysis essentially,
  as we will see at the end of Section~\ref{subsection: discretization}.
}:

\begin{FencePatrollingProblem}
We want to patrol an interval (called the fence) using $k$ mobile agents.
We are given their maximum speeds $v _1$, \ldots, $v _k \geq 0$ 
and the \emph{idle time} $T > 0$.
For each point~$x$ on the fence and time $t \in \Rset$,
there must be an agent who visits $x$ during the interval $[t, t+T)$.
How long can the fence be?
\end{FencePatrollingProblem}

Note that if we can patrol a fence of length $L$ with idle time $T$,
we can patrol a fence of length $\alpha L$ with idle time $\alpha T$
by scaling the whole schedule by any $\alpha > 0$.
Thus, we are only interested in the ratio of $L$ and $T$.
Unless stated otherwise, we fix the idle time to $T = 1$.

Czyzowicz et al.~\cite{esa2011} considered the following simple strategy
that patrols a fence of length $(v _1 + \dots + v _k) / 2$ (with idle time~$1$),
and pointed out that no schedule can patrol more than twice as long a fence as this strategy:

\begin{partition}
Divide the fence into $k$ subintervals
of lengths proportional to the maximum speeds of the agents,
and let each agent move back and forth on its corresponding subinterval.
\end{partition}

They conjectured that this gives the optimal schedule,
but it was disproved later:
Kawamura and Kobayashi~\cite{isaac} found
a setting of maximum speeds $v _1$, \ldots, $v _k$
and a schedule that patrols a fence slightly longer
than the partition-based strategy does.
Thus, a natural question arises:
what is the biggest ratio between the performance of an optimal schedule and
that of the partition-based strategy?
Formally, we want to determine the smallest constant $c$ such that
no schedule can patrol a fence of length
$c (v _1 + \dots + v _k) / 2$. 

Czyzowicz et al.'s result~\cite{esa2011} mentioned above means that $1 \leq c \leq 2$,
and their conjecture was that $c = 1$.
Kawamura and Kobayashi's example shows that $c \geq 42/41$.
Later this lower bound was improved to $25/24$~\cite{Chen13, Dumitrescu2014}.
In Section~\ref{subsection: ratio 4/3},
we further improve the lower bound to $4/3$.

\subsection{Unidirectional circle patrolling}
\label{subsection: circle patrolling introduction}

Czyzowicz et al.~\cite{esa2011} also proposed the following
(Unidirectional) Circle Patrolling Problem,
which we will discuss in Section~\ref{section: circle}:

\begin{CirclePatrollingProblem}
We want to patrol a circle using $k$ mobile agents.
We are given their maximum speeds $v_1$, \ldots, $v_k \geq 0$
and the idle time~$T > 0$. 
For each point~$x$ on the circle and time $t \in \Rset$,
there must be an agent who visits the point~$x$ during the interval $[t, t+T)$.
Each agent $i$ can move along the circle clockwise
at any speed between $0$ and $v _i$.
How long can the perimeter of the circle be?
\end{CirclePatrollingProblem}

Again, we fix the idle time to $T = 1$ unless noted otherwise.

Czyzowicz et al.~\cite{esa2011} considered the following simple strategy,
which patrols a perimeter of length $\max _r r v_r$ (with idle time $1$):

\begin{runners}
Without loss of generality, we can assume that $v_1 \geq \cdots \geq v_k$.
We place the fastest $r$ agents equidistantly on the circle
and let them move at speed $v _r$,
choosing the optimal $r$.
\end{runners}

They conjectured that this is optimal,
but Dumitrescu et al.\ \cite[Theorem~1]{Dumitrescu2014} found an example
refuting this.
Again, we may ask about the ratio of this simple strategy and the optimal schedule.
For this problem, we conjecture that the Runners Strategy is
not even a constant-ratio approximation strategy.
Formally, we suspect that for any constant $c$,
there exist ($k$ and) $v_1$, \ldots, $v_k$ such that
we can patrol a perimeter of $c \max _r r v_r$.

In attempt to progress towards the conjecture,
we relate the problem
to what we call \emph{constant gap families}
in Section~\ref{subsection: constant gap families}.
Using this relation,
we construct a schedule that patrols $1.05 \max _r r v _r$
in Section~\ref{subsection: circle patrolling 1.05}.

\subsection{Point patrolling}
\label{subsection: point patrolling introduction}

Consider now a variant of the above circle patrolling problem
where there is one special point on the circle,
and our goal is to patrol (i.e., visit frequently enough) just this point,
rather than all points on the circle.
Here, let us assume that a ``visit'' to the point only happens
when an agent \emph{arrives} at this point
(so staying at the point forever does not solve the problem).
In this situation, the maximum speed of each agent simply sets
a lower bound on the time between its two consecutive visits.
This motivates us to introduce the following problem,
which we discuss in Section~\ref{section: point}:

\begin{PointPatrollingProblem}
We want to patrol a point using $k$ agents.
We are given minimum gaps $a _1$, \ldots, $a _k > 0$ and the idle time $T > 0$.
For each agent~$i = 1$, \ldots, $k$,
the gaps between two distinct visits by $i$ must always be at least $a _i$.
Can we \emph{patrol} the point with idle time $T$
so that,
for each time $t \in \Rset$,
there is an agent~$i$ that visits the point
during the time interval $[t, t + T)$?
\end{PointPatrollingProblem}

We could of course consider the problem where
we are given $(a _1, \ldots, a _k)$ and asked to minimize $T$,
but this optimization problem can be reduced to the above decision version
by binary search.

As we mention in Section~\ref{subsection: discretized point patrolling},
this decision problem can be further reduced to a discrete version
where the minimum gaps $a _1$, \ldots, $a _k$ are positive integers and
the goal is to visit the point at each integer time.
We study the relation between this problem and the quantity $1 / a _1 + \cdots + 1 / a _k$.

In Section~\ref{section: complexity},
we analyze the complexity of problems that are related to this discretized problem.

\subsection{Related work}

Problems essentially similar to patrolling are discussed under various names and in various real-world contexts,
such as
maintenance of facilities \cite{AGH98, BBNS02},
managing delivery vehicles \cite{campbell2005vehicle},
periodic scheduling \cite{BBNS02, SST09}, and
periodic latency problems \cite{charlemagne}.

While our problems impose frequent visit of every point on the fence
(an interval or a circle),
some authors consider settings where only parts of the terrain
need to be visited often~\cite{Collins13}.
Other work considers patrolling of more general graphs
(rather than just paths and cycles), in which case
we may impose frequent visit of
vertices \cite{Gorain-Mandal, pasqualetti_franchi_bullo_2010, charlemagne, noshiro_thesis}
or edges \cite{YanovskiWB2003}.
Some settings in literature are perhaps motivated by realistic consideration:
the patrollers maybe do not see just points but have visibility regions
\cite{Chen13, visibility};
they may need to stay at a point for a while or walk more slowly
when they are looking for intruders rather than just moving \cite{charlemagne, CGKMP16}.

While we are interested in minimizing idle time,
i.e., the longest time during which any point is left unvisited,
there can be other objectives, such as
minimizing the average \cite{ESK08}
and using as few agents~\cite{campbell2005vehicle}
or as little global knowledge as possible.
Some authors consider the intruder as an intelligent player
and study the situation in probabilistic or game-theoretic ways
\cite{alpern_patrolling_2011, papadaki_patrolling_2016}.
Our setting
can be viewed as a special case
where we are required to succeed in intercepting the intruder
with probability $1$.

\section{Interval patrolling}

In the Interval Patrolling Problem
(Section~\ref{subsection: interval patrolling introduction}),
the fence is an interval $[0, L]$,
and a \emph{schedule} (for agents with maximum speeds $v _1$, \ldots, $v _k$)
is formally
a $k$-tuple $(a _1, \ldots , a _k)$ of functions,
where each
$a _i \colon \Rset \to \Rset$ satisfies
$\lvert a _i (s) - a _i (t) \rvert \leq v _i \cdot \lvert s - t \rvert$
for all $s$, $t \in \Rset$.
The schedule \emph{patrols} the fence with idle time $T$ if
for each time $t \in \Rset$ and each location $x \in [0, L]$,
there are an agent~$i$ and a time $t' \in [t, t + T)$ such that $a _i (t') = x$.

In Section~\ref{subsection: ratio 4/3},
we prove that for any $c < 4 / 3$,
there exists a schedule that patrols an interval $c$ times as long as the partition-based strategy.
This improves the same claim for $c < 25 / 24$ established previously~\cite{Chen13, Dumitrescu2014}.

In Section~\ref{subsection: discretization},
we prove that any schedule can be approximated arbitrarily closely by a periodic schedule.
Thus, for any $\varepsilon > 0$, we can find in finite time
a schedule that is $1 - \varepsilon$ times as good as any schedule.

\subsection{A schedule patrolling a long interval}
\label{subsection: ratio 4/3}

The goal of this section is the following:

\begin{theorem}
\label{theorem: strategy 4/3}
For any $c < 4 / 3$,
there are maximum speeds $v _1$, \ldots, $v _k$
and a schedule that patrols an interval of length
$c (v _1 + \dots + v _k) / 2$
(with idle time~$1$).
\end{theorem}

\begin{proof}
We construct, for each pair of positive integers $n$ and $L$,
a schedule that
patrols an interval of length $L$ (with idle time~$1$)
using
$n + L - 1$ agents with maximum speed $1$ and
$n L$ agents with maximum speed $1/(2n-1)$.
If the same set of agents follows the partition-based strategy,
they would patrol the length
$\frac 1 2 (n + L - 1 + n L / (2 n - 1))$.
The ratio between $L$ and this approaches $4 / 3$
when both $n$ and $L / n$ are big,
and hence we have the theorem.
Our schedule is as follows
(Figure~\ref{figure: four_thirds}).
\begin{figure}
\begin{center}
\includegraphics{./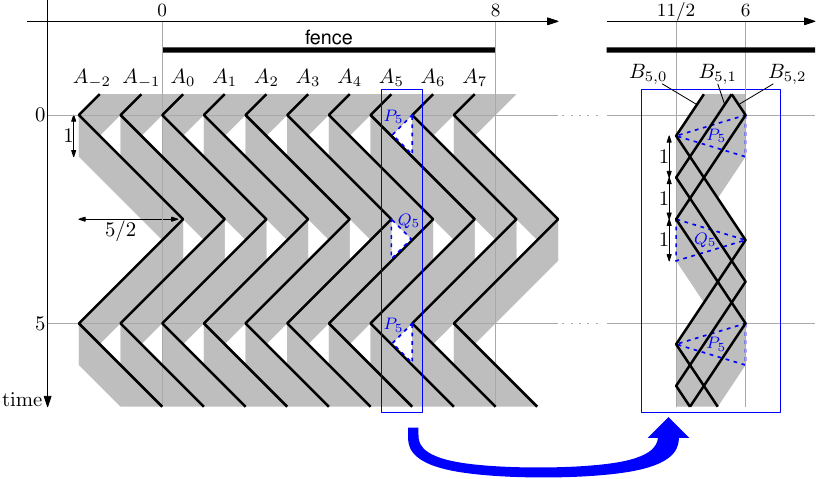}
\caption{%
The strategy in the proof of Theorem~\ref{theorem: strategy 4/3} when $n = 3$ and $L = 8$.
The trajectories of the agents are the thick solid lines,
and the regions they cover (the points that have been visited during the past unit time)
are shown shaded.
The $n + L - 1$ faster agents $A _{-n + 1}$, \ldots, $A _{L - 1}$ (left) move back and forth
with period $2 n - 1$, but leave some triangular regions (dotted) uncovered.
These regions are covered by the $n L$ slow agents $B _{0, 0}$, \ldots, $B _{n-1, L-1}$
(right; scaled up horizontally for clarity).}
\label{figure: four_thirds}
\end{center}
\end{figure}
\begin{itemize}
\item
Each of the $n + L - 1$ agents~$A _i$ ($-n < i < L$) with speed $1$
visits the locations $i$ and $i + n - 1/2$
alternately (at its maximum speed);
it is at location~$i$ at time~$0$.
(This means that some agents occasionally step out of the fence $[0, L]$;
to avoid this,
we could simply modify the schedule
so that they stay at the endpoint for a while.)
\item
Each of the $n L$ agents $B _{i, j}$ ($0 \leq i < L$, $0 \leq j < n$)
with speed $1 / (2 n - 1)$
visits the locations $i + 1/2$ and $i + 1$
alternately (at its maximum speed);
it is at location $i + 1/2$ at time $j + 1/2$.
\end{itemize}
We claim that this schedule indeed patrols the fence.
That is,
every $(x, t) \in [0, L] \times \Rset$ is \emph{covered} by some agent
in the sense that
$x$ is visited by this agent during the time interval $[t-1, t]$.

To see this,
note that every agent in this schedule repeats its movement
with a period of $2 n - 1$.
Thus, in this proof, we consider the time modulo $2 n - 1$.
Also, it is straightforward to verify that
the only regions (in the time period $[0, 2 n - 1]$)
not covered by the faster agents $A _i$ ($-n < i < L$)
are regions $P _k$ and $Q _k$, for $k = 0$, \ldots , $L - 1$, where
\begin{itemize}
\item
  $P _k$ is the triangle with vertices $(x, t) = (k + 1, 0), (k + 1, 1), (k + 1/2, 1/2)$, and
\item
  $Q _k$ is the triangle with vertices $(x, t) = (k+1/2, n-1/2), (k+1/2, n+1/2), (k+1, n)$
\end{itemize}
(Figure~\ref{figure: four_thirds}, dotted lines).
The region $P _k$ is covered by the agents $B _{k, i}$
as they move from $x = k + 1$ to $x = k + 1 / 2$
during the time interval $[i + n, i + 2 n - 1 / 2]$.
The region $Q _k$ is covered by the agents $B _{k, i}$
as they move from $x = k + 1 / 2$ to $x = k + 1$
during the time interval $[i + 1 / 2, i + n]$.
\end{proof}

We conjecture\footnote{%
  While this paper was under review,
  a disproof of this conjecture was published by Haeupler et al.~\cite{HKMPP19}.
  They constructed, for each $\varepsilon > 0$,
  an example where the agents can patrol a fence $2 (1 - \varepsilon )$ times
  as long as they would with the partition-based strategy.
}
that the above construction is optimal:

\begin{conjecture}
No schedule can patrol an interval that is more than $4 / 3$ times as long as
the partition-based strategy.
\end{conjecture}

\subsection{Zigzag schedules}
\label{subsection: discretization}

One of the difficulties about interval (or other) patrolling is that
schedules consist of real functions $a _i \colon \Rset \to \Rset$
and we hence cannot check all possible schedules exhaustively.
Below, we prove that,
for the purpose of discussing bounds on the length of the fence
(as we did in Section~\ref{subsection: ratio 4/3}),
we may restrict attention to a certain class of periodic schedules.
Using this, we show that,
given the maximum speeds $v _1$, \ldots , $v _k > 0$ of the agents
together with a positive number $\varepsilon > 0$,
we can find a schedule for them to patrol
an interval of length at least $1 - \varepsilon $ times
the maximum length that can be patrolled.

The movement of an agent during a time interval $
[t _{\mathrm{start}}, t _{\mathrm{end}}] \subseteq \Rset
$ is
represented by a function $
a \colon [t _{\mathrm{start}}, t _{\mathrm{end}}] \to \Rset
$. This function is called a \emph{$(v, \xi )$-zigzag movement}, for $v$, $\xi > 0$
(Figure~\ref{figure: zigzag}),
\begin{figure}
\begin{center}
\includegraphics{./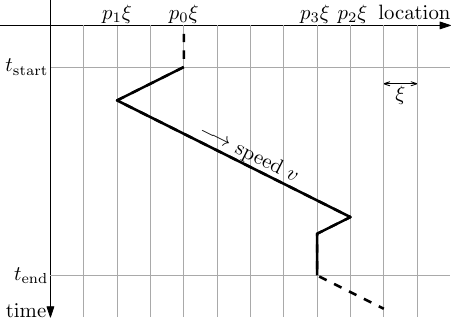}
\caption{%
A $(v, \xi)$-zigzag movement on a time interval~$[t _{\mathrm{start}}, t _{\mathrm{end}}]$.
}
\label{figure: zigzag}
\end{center}
\end{figure}
if there are integers $p _0$, $p _1$, $p _2$, $p _3 \in \Zset$
such that the agent
\begin{itemize}
\item
starts at time $t _{\mathrm{start}}$ at location $p _0 \xi $,
\item
moves at speed~$v$
until it reaches $p _1 \xi $,
\item
moves at speed~$v$
until it reaches $p _2 \xi $,
\item
moves at speed~$v$
until it reaches $p _3 \xi $,
\item
and then stays there until time $t _{\mathrm{end}}$.
\end{itemize}
For this movement to be possible,
the entire route must be short enough to be traveled with speed $v$; that is,
\begin{equation}
\label{equation: zigzag bound}
\lvert p _0 - p _1 \rvert \xi + \lvert p _1 - p _2 \rvert \xi + \lvert p _2 - p _3 \rvert \xi \leq \tau v,
\end{equation}
where $\tau := t _{\mathrm{end}} - t _{\mathrm{start}}$ is the length of the
time interval.

The following lemma says that
any movement of an agent on a time interval can be turned into
a zigzag movement taking an only slightly longer time
without shrinking the set of points traveled.

\begin{lemma}
\label{lemma: zigzag}
For any positive constants $\delta $, $\tau $, $v > 0$,
we have the following for all sufficiently small $\xi > 0$.
For any function $
a \colon [t _{\mathrm{start}}, t _{\mathrm{end}}] \to \Rset
$ on an interval of length $\tau$
such that $\lvert a (s) - a (t) \rvert \leq v \cdot \lvert s - t \rvert$
for all $s$, $t \in [t _{\mathrm{start}}, t _{\mathrm{end}}]$,
there is a $(v, \xi)$-zigzag movement $
a' \colon
[(1 + \delta ) t _{\mathrm{start}}, (1 + \delta ) t _{\mathrm{end}}] \to \Rset
$
such that
\begin{itemize}
\item
$\displaystyle
a' \bigl( (1 + \delta) t _{\mathrm{start}} \bigr) = \biggl\lfloor \frac{a (t _{\mathrm{start}})}{\xi} \biggr\rfloor \xi
$ and
$\displaystyle
a' \bigl( (1 + \delta) t _{\mathrm{end}} \bigr) = \biggl\lfloor \frac{a (t _{\mathrm{end}})}{\xi} \biggr\rfloor \xi$;
\item
any location visited by $a$ is visited by $a'$---that is,
for each $t \in [t _{\mathrm{start}}, t _{\mathrm{end}}]$
there is $t' \in [(1 + \delta ) t _{\mathrm{start}}, (1 + \delta ) t _{\mathrm{end}}]$ such that
$a' (t') = a (t)$.
\end{itemize}
\end{lemma}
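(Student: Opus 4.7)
The plan is to construct $a'$ explicitly as a zigzag whose four turning points $p_0\xi, p_1\xi, p_2\xi, p_3\xi$ are chosen as suitable multiples of $\xi$ near the endpoints and extreme values of $a$. Let $m$ and $M$ denote the minimum and maximum of $a$ on $[t_{\mathrm{start}}, t_{\mathrm{end}}]$, attained at times $t_m$ and $t_M$ respectively. Without loss of generality assume $t_M \leq t_m$ (otherwise the roles of $p_1$ and $p_2$ are swapped in what follows). I would take $p_0 = \lfloor a(t_{\mathrm{start}})/\xi \rfloor$, $p_1 = \lceil M/\xi \rceil$, $p_2 = \lfloor m/\xi \rfloor$, and $p_3 = \lfloor a(t_{\mathrm{end}})/\xi \rfloor$, and define $a'$ to move at speed $v$ straight between consecutive heights, resting at $p_3\xi$ once it arrives.

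The two bulleted properties are then easy to check. The endpoint equalities are immediate from the definitions of $p_0$ and $p_3$. For the covering property, note that $p_2\xi \leq m \leq M \leq p_1\xi$, so the path of $a'$ rises from $p_0\xi$ to $p_1\xi$, descends to $p_2\xi$, and rises to $p_3\xi$; by the intermediate value theorem it thus visits every point of $[p_2\xi, p_1\xi]$, which contains the range $[m,M]$ of $a$.

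The heart of the argument is verifying the feasibility inequality~\eqref{equation: zigzag bound} on the stretched interval of length $(1+\delta)\tau$. Each $p_i$ differs from the value it approximates by less than $\xi$, so the total length of the zigzag is bounded by
\[
\bigl(M - a(t_{\mathrm{start}})\bigr) + (M - m) + \bigl(a(t_{\mathrm{end}}) - m\bigr) + 6\xi.
\]
Since $a$ is $v$-Lipschitz its total variation on $[t_{\mathrm{start}}, t_{\mathrm{end}}]$ is at most $v\tau$; and since $a$ takes the values $a(t_{\mathrm{start}}), M, m, a(t_{\mathrm{end}})$ in that temporal order, this total variation is at least the first three summands. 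Hence the zigzag length is at most $v\tau + 6\xi$, which is bounded by $v(1+\delta)\tau$ whenever $\xi \leq v\delta\tau/6$.

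The main obstacle I anticipate is routine bookkeeping around the rounded integers: one has to confirm the orderings $p_0 \leq p_1$, $p_2 \leq p_1$, and $p_2 \leq p_3$ so that the sum $|p_0-p_1|\xi + |p_1-p_2|\xi + |p_2-p_3|\xi$ in~\eqref{equation: zigzag bound} matches the signed sum used above, and to handle the degenerate cases ($t_M = t_m$, meaning $a$ is essentially constant, or $t_M, t_m$ coinciding with the endpoints), in which the zigzag collapses to shorter pieces but the same estimates go through unchanged.
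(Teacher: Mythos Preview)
Your proposal is correct and follows essentially the same approach as the paper: define the zigzag by rounding the start, extreme, and end values of $a$ to the $\xi$-grid, then bound the total path length using the Lipschitz condition. The only cosmetic differences are that the paper takes the opposite WLOG case ($t_{\min}\le t_{\max}$, going first to the floor of the minimum then to the ceiling of the maximum) and obtains the slightly tighter slack $5\xi$ rather than your $6\xi$, yielding $\xi\le v\delta\tau/5$; the bookkeeping you flag about the orderings $p_0\le p_1$, $p_2\le p_1$, $p_2\le p_3$ is indeed routine (it follows from $m\le a(t_{\mathrm{start}}),a(t_{\mathrm{end}})\le M$ and monotonicity of floor/ceiling) and the paper leaves it equally implicit.
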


\begin{proof}
Let $\xi \leq (t _{\mathrm{end}} - t _{\mathrm{start}}) v \delta / 5$.
Suppose that $a$ takes its minimum and maximum at
$t _{\mathrm{min}}$, $t _{\mathrm{max}} \in [t _{\mathrm{start}}, t _{\mathrm{end}}]$, respectively.
We may assume that $t _{\mathrm{min}} \leq t _{\mathrm{max}}$
(the other case can be treated similarly).
Define $a'$ to be the $(v, \xi )$-zigzag movement specified by
\begin{align*}
p _0 & = \biggl\lfloor \frac{a (t _{\mathrm{start}})}{\xi} \biggr\rfloor, &
p _1 & = \biggl\lfloor \frac{a (t _{\mathrm{min}})}{\xi} \biggr\rfloor, &
p _2 & = \biggl\lceil \frac{a (t _{\mathrm{max}})}{\xi} \biggr\rceil, &
p _3 & = \biggl\lfloor \frac{a (t _{\mathrm{end}})}{\xi} \biggr\rfloor
\end{align*}
(see the beginning of Section~\ref{subsection: discretization} for the meaning of these numbers).
This is indeed possible:
we have \eqref{equation: zigzag bound} for $
\tau = (1 + \delta ) (t _{\mathrm{end}} - t _{\mathrm{start}})
$ because
\begin{align*}
&
  (p _{0} - p _{1}) \xi + (p _{2} - p _{1}) \xi + (p _{2} - p _{3}) \xi
\notag
\\
&
\leq
  \bigl( a (t _{\mathrm{start}}) - a (t _{\mathrm{min}}) + \xi \bigr)
 +
  \bigl( a (t _{\mathrm{max}}) - a (t _{\mathrm{min}}) + 2 \xi \bigr)
 +
  \bigl ( a (t _{\mathrm{max}}) - a (t _{\mathrm{end}}) + 2 \xi \bigr)
\notag
\\
&
=
  \bigl( a (t _{\mathrm{start}}) - a (t _{\mathrm{min}}) \bigr)
 +
  \bigl( a (t _{\mathrm{max}}) - a (t _{\mathrm{min}}) \bigr)
 +
  \bigl( a (t _{\mathrm{max}}) - a (t _{\mathrm{end}}) \bigr)
 +
  5 \xi
\notag
\\
&
\leq
  (t _{\mathrm{min}} - t _{\mathrm{start}}) v + (t _{\mathrm{max}} - t _{\mathrm{min}}) v + (t _{\mathrm{end}} - t _{\mathrm{max}}) v + 5 \xi
\notag
\\
&
=
 (t _{\mathrm{end}} - t _{\mathrm{start}}) v + 5 \xi
\leq
 (1 + \delta) (t _{\mathrm{end}} - t _{\mathrm{start}}) v.
\end{align*}
It is straightforward to see
that this zigzag movement has the claimed properties.
\end{proof}

Next, we prove in the following lemma that
any schedule can be converted into one
that consists of zigzag movements
without deteriorating the idle time too much.

For positive $\xi$, $\tau >0$,
a schedule $(a _1, \ldots, a _k)$
(for $k$ agents with maximum speeds $v _1$, \ldots , $v _k$)
is called
a \emph{$(\xi , \tau )$-zigzag schedule} if
the movement of each agent $i = 1$, \ldots, $k$
during each time interval $[m \tau, (m + 1) \tau]$, $m \in \Zset$,
is a $(v _i, \xi)$-zigzag movement.

\begin{lemma}
\label{lemma: zigzag schedule}
For any $\varepsilon > 0$ and speeds $v _1$, \ldots , $v _k > 0$,
there are $\xi > 0$ and $\tau' > 0$ satisfying the following.
Suppose that there is a schedule for a set of agents
with maximum speeds $v _1$, \ldots, $v _k$
that patrols a fence with some idle time $T > 0$.
Then there is a $(\xi, \tau')$-zigzag schedule
for the same set of agents
that patrols the same fence
with idle time $T (1 + \varepsilon)$.
\end{lemma}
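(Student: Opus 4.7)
The plan is to apply Lemma~\ref{lemma: zigzag} piecewise to each agent. I would fix $\delta := \varepsilon / 3$ (say), choose $\tau > 0$ small, set $\tau' := (1+\delta)\tau$, and take $\xi > 0$ small enough that Lemma~\ref{lemma: zigzag} applies with parameters $(\delta, \tau, v_i)$ for every $i$. For each integer $m$ and each agent $i$, applying the lemma to $a_i$ restricted to $[m\tau, (m+1)\tau]$ produces a $(v_i, \xi)$-zigzag $a'_i$ on $[m\tau', (m+1)\tau']$. The endpoint conditions in Lemma~\ref{lemma: zigzag} force the pieces coming from the intervals labelled $m-1$ and $m$ to agree at the common boundary $m\tau'$, both equalling $\lfloor a_i(m\tau)/\xi\rfloor\xi$, so they glue into a continuous, $v_i$-Lipschitz function $a'_i \colon \Rset \to \Rset$; by construction $(a'_1, \ldots, a'_k)$ is a $(\xi, \tau')$-zigzag schedule.

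To verify the idle time, I would fix $t^* \in \Rset$ and $x \in [0, L]$ and trace back to the original via the shifted correspondence $t := (t^* + \tau')/(1+\delta)$ rather than the naive $t^*/(1+\delta)$. The original idle-time assumption supplies an agent $i$ and time $\tilde t \in [t, t+T)$ with $a_i(\tilde t) = x$; letting $m$ be the integer with $\tilde t \in [m\tau, (m+1)\tau)$, Lemma~\ref{lemma: zigzag} furnishes $\tilde t' \in [m\tau', (m+1)\tau']$ with $a'_i(\tilde t') = x$. Using $m\tau \in [\tilde t - \tau, \tilde t]$ together with $\tau' = (1+\delta)\tau$ and the definition of $t$, a short computation yields
\[
  t^* \;\leq\; m\tau' \;\leq\; \tilde t' \;\leq\; (m+1)\tau' \;\leq\; t^* + 2\tau' + T(1+\delta),
\]
so choosing $\tau'$ such that $2\tau' + T\delta \leq T\varepsilon$ (for instance $\tau' \leq T\varepsilon/4$ with the above $\delta$) gives $\tilde t' \in [t^*, t^* + T(1+\varepsilon))$, completing the verification.

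The main obstacle I expect is obtaining both ends of the inclusion $\tilde t' \in [t^*, t^* + T(1+\varepsilon))$ simultaneously: Lemma~\ref{lemma: zigzag} locates $\tilde t'$ only somewhere in its $\tau'$-interval, not at a specific time, so without the shift by $\tau'/(1+\delta)$ in the definition of $t$ the time $\tilde t'$ could slip up to $\tau'$ before $t^*$. Introducing this shift cures the lower end at the cost of another $\tau'$ of slack at the upper end, which together with the $T\delta$ error from the time-rescaling inside Lemma~\ref{lemma: zigzag} must fit inside the budget $T\varepsilon$; this is what forces $\tau'$ (and hence $\xi$) to be taken small compared to $T\varepsilon$.
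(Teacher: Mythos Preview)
Your proposal is correct and follows essentially the same approach as the paper: apply Lemma~\ref{lemma: zigzag} to each length-$\tau$ subinterval, glue the pieces via the endpoint conditions, and absorb the resulting $O(\tau')$ slack and the $(1+\delta)$ time-dilation into the idle-time budget $T\varepsilon$. The only cosmetic difference is that the paper verifies the idle-time bound by contrapositive (assume a location is unvisited for time $T(1+\varepsilon)$ in the new schedule, round to the nearest multiples of $\tau'$, and deduce it was unvisited for time $T$ in the old schedule), whereas you argue directly via the shifted time $t = (t^* + \tau')/(1+\delta)$; the underlying arithmetic is the same.
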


\begin{proof}
We show that it suffices to
let $\xi $ be so small that we have the claim of
Lemma~\ref{lemma: zigzag} for
\begin{align*}
\delta & = \frac \varepsilon 2, &
\tau & = \frac{T \delta }{2 (1 + \delta)}
\end{align*}
and for all speeds~$v = v _i$, and to let
$\tau' = (1 + \delta ) \tau$.

Using the schedule $(a _1, \ldots , a _k)$ that we start with,
we define the claimed $(\xi, \tau')$-zigzag schedule
$(a' _1, \ldots , a' _k)$ as follows.
For each agent~$i$ and each $m \in \Zset$,
we define $a' _i$ on the time interval $[m \tau', (m + 1) \tau']$
to be the zigzag movement obtained by Lemma~\ref{lemma: zigzag}
from the movement $a _i$ during $[m \tau, (m + 1) \tau]$.
This defines $a _i$ consistently (at multiples of $\tau$) because of the
first property in Lemma~\ref{lemma: zigzag}.

To see that this schedule $(a' _1, \ldots, a' _k)$ patrols the fence
as claimed,
suppose that a location on the fence is left unvisited
by the schedule $(a' _1, \ldots, a' _k)$
during a time interval $[\underline t, \overline t]$
of length $T (1 + \varepsilon)$,
and hence during its subinterval $
[
 \lceil \underline t / \tau' \rceil \tau',
 \lfloor \overline t / \tau' \rfloor \tau'
]
$.
By the second property in Lemma~\ref{lemma: zigzag},
this point is also left unvisited
by the schedule $(a _1, \ldots, a _k)$
during the time interval $
[
 \lceil \underline t / \tau' \rceil \tau,
 \lfloor \overline t / \tau' \rfloor \tau
]
$, whose length is
\begin{equation*}
 \biggl(
   \biggl\lfloor \frac{\overline t}{\tau'} \biggr\rfloor
  -
   \biggl\lceil \frac{\underline t}{\tau'} \biggr\rceil
 \biggr) \tau
\geq
 \biggl(
  \frac{\overline t}{\tau'} - \frac{\underline t}{\tau'} - 2
 \biggr)
 \tau
=
 \frac{T (1 + \varepsilon)}{\tau'} \tau - 2 \tau
=
 \frac{T (1 + 2 \delta)}{1 + \delta} - 2 \tau
=
 T.
\qedhere
\end{equation*}
\end{proof}

The next lemma says that a zigzag schedule can be made periodic
without changing the idle time.

\begin{lemma}
\label{lemma: periodic}
Suppose that there is a $(\xi, \tau )$-zigzag schedule
for a set of agents
that patrols a fence with some idle time.
Then
there is a periodic $(\xi, \tau)$-zigzag schedule for the same agents
that patrols the same fence with the same idle time.
\end{lemma}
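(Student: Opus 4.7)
The plan is a pigeonhole argument. Because each agent's motion on every interval $[m\tau, (m+1)\tau]$ is a $(v_i, \xi)$-zigzag, it is entirely specified by the quadruple of integer waypoints $(p_0, p_1, p_2, p_3)$. Positions on the fence $[0, L]$ have only finitely many representations as integer multiples of $\xi$, and the zigzag bound~\eqref{equation: zigzag bound} limits the spread of $p_1, p_2, p_3$ around $p_0$. Hence for each agent only finitely many admissible quadruples exist, and so only finitely many ``movement profiles'' $M_m$ of the full $k$-tuple of agents per interval.

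Now I would fix an integer $K$ large enough that $K\tau \geq T$, where $T$ is the idle time of the given schedule. Define the \emph{state} at time $m\tau$ to be the tuple $(M_m, M_{m+1}, \ldots, M_{m+K-1})$. Since this lives in a finite set, the pigeonhole principle yields $m_1 < m_2 \in \Zset$ with matching states. I then construct the periodic schedule $(a_1', \ldots, a_k')$ of period $P := (m_2 - m_1)\tau$ by setting $a_i'(t) := a_i(m_1\tau + ((t - m_1\tau) \bmod P))$ for all $t \in \Rset$. It is again a $(\xi, \tau)$-zigzag schedule because each block consists of zigzag movements of the original, and continuity across the seams $m_1\tau + nP$ is guaranteed because the starting position encoded in $M_{m_1}$ equals that in $M_{m_2}$.

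What remains is verifying the patrol property with idle time $T$. For any length-$T$ window $[t, t+T)$ lying inside a single period, the claim is inherited directly from the original schedule via the identification $a_i' \equiv a_i$ (up to a shift by a multiple of $P$) on that period. The delicate case is a wrap-around window crossing a seam: here I would use that $M_{m_1+j} = M_{m_2+j}$ for $j = 0, \ldots, K-1$, so the original schedule performs exactly the same motion on $[m_2\tau, m_2\tau + K\tau)$ as it does on $[m_1\tau, m_1\tau + K\tau)$. Since $K\tau \geq T$, this lets me identify the wrap-around window of $(a_i')$ with a genuine length-$T$ window of the original schedule, which by hypothesis visits every point of the fence.

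The main obstacle is precisely this boundary case, and it is the reason for including the $K$ subsequent profiles in the state rather than only the instantaneous positions: matching of positions alone at $m_1\tau$ and $m_2\tau$ would allow the post-seam motion in the periodic schedule to differ from the post-$m_2\tau$ motion in the original, breaking the transfer of the patrol property across seams.
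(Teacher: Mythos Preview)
Your argument is correct and matches the paper's proof almost exactly: bound the number of admissible zigzag quadruples, record blocks of $\lceil T/\tau\rceil$ consecutive movement profiles as the ``state,'' apply pigeonhole to find a repeat, and periodize. The only point to tighten is your finiteness justification: the bound~\eqref{equation: zigzag bound} controls $p_1,p_2,p_3$ relative to $p_0$, but nothing in the definition forces $p_0$ itself to lie on the fence, so (as the paper does) you should first note that one may assume without loss of generality that every agent stays within $[\lfloor 0/\xi\rfloor\xi,\ \lceil L/\xi\rceil\xi]$, after which all four waypoints are bounded outright.
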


\begin{proof}
Let $[a, b]$ be the fence,
$k$ be the number of agents,
and $T$ be the idle time.
We may assume that in the given ($(\xi, \tau)$-zigzag) schedule,
every agent stays within
$[A \xi , B \xi]$, where
$A = \lfloor a / \xi \rfloor$ and $B = \lceil b / \xi \rceil$,
i.e., it never goes far off the fence.
In such a schedule,
the movement of each agent
during each time interval $[m \tau , (m + 1) \tau ]$,
$m \in \Zset$,
is specified by a quadruple of integers
$p _0$, $p _1$, $p _2$, $p _3 \in \{A, A + 1, \ldots , B\}$,
and hence there are at most $(B - A + 1) ^4$ possible such movement.

Let $Q = \lceil T / \tau \rceil $.
For each $m \in \Zset$,
there are at most $(B - A + 1) ^{4 k Q}$ possible
ways that the $k$ agents can move during the time interval
$[m \tau , (m + Q) \tau ]$.
Since this is finite, there are integers $m _0$, $m _1$ with $m _0 < m _1$ such that
in the given $(\xi , \tau)$-zigzag schedule,
all agents move during the time interval
$[m _1 \tau , (m _1 + Q) \tau]$
in exactly the same way as they did during
$[m _0 \tau , (m _0 + Q) \tau]$.
Consider the periodic schedule, with period $(m _1 - m _0) \tau$,
where each agent perpetually repeats
its movement during $[m _0 \tau , m _1 \tau ]$ in the original schedule.
This schedule patrols the fence with idle time $T$,
because the movement of the agents during any time period of length $T$
is identical to their movement in the original schedule
in a length-$T$ subinterval of $[m _0 \tau, (m _1 + Q) \tau]$.
\end{proof}

Using the above lemmas,
we obtain an algorithm that solves the Interval Patrolling Problem
with arbitrarily high precision
in the following sense.
Suppose that there is a schedule that patrols a fence of length $L > 0$ with idle time $T > 0$
using agents with maximum speeds $v _1$, \ldots, $v _k$.
By Lemma~\ref{lemma: zigzag schedule},
there is a $(\xi, \tau)$-zigzag schedule
that patrols a fence of length $(1 - \varepsilon ) L$,
for some $\xi$, $\tau > 0$ determined by the inputs $\varepsilon$ and $v _1$, \ldots, $v _k$.
By Lemma~\ref{lemma: periodic},
there is a $(\xi , \tau)$-zigzag schedule with period~$p$
that patrols the same length $(1 - \varepsilon) L$,
for some $p > 0$ determined by the inputs.
Since there are only finitely many $(\xi, \tau)$-zigzag schedules with period~$p$,
we can check all of them in a finite (though long) time.
Thus,

\begin{theorem}
There is an algorithm that,
given $v _1$, \ldots, $v _k$, $T$ and $\varepsilon > 0$,
finds a schedule that patrols a fence of length at least $1 - \varepsilon $ times
the length of the fence patrolled by the same agents using any schedule.
\end{theorem}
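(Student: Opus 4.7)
The plan is to compose the three preceding lemmas with a brute-force search over a finite set of candidate schedules. Let $L^\ast$ denote the supremum of fence lengths patrollable with the given speeds $v_1, \ldots, v_k$ and idle time $T$; we know $L^\ast$ lies in the interval $[(v_1 + \cdots + v_k) T / 2,\, (v_1 + \cdots + v_k) T]$ by the partition-based strategy and the upper bound of Czyzowicz et al., so it is controlled a priori. The goal is to output a schedule patrolling a fence of length at least $(1 - \varepsilon) L^\ast$ with idle time $T$.

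First, pick $\varepsilon' > 0$ small enough that $1/(1 + \varepsilon') \geq 1 - \varepsilon/2$, and apply Lemma~\ref{lemma: zigzag schedule} with this $\varepsilon'$ to obtain $\xi > 0$ and $\tau' > 0$; since the conclusion of that lemma remains valid for any smaller positive $\xi$, we may additionally insist $\xi \leq \varepsilon (v_1 + \cdots + v_k) T / 8$. Now iterate over $n = 0, 1, \ldots, \lceil (v_1 + \cdots + v_k) T / \xi \rceil$: for each $n$ set $L = n \xi$. By Lemma~\ref{lemma: periodic}, any $(\xi, \tau')$-zigzag schedule patrolling $[0, L]$ with idle time $T(1 + \varepsilon')$ can be replaced by a periodic one whose period $m \tau'$ satisfies $m \leq (\lceil L/\xi \rceil + 1)^{4 k \lceil T(1+\varepsilon')/\tau' \rceil}$ and whose per-period behaviour is determined by $4 k m$ integers in $\{0, \ldots, \lceil L/\xi \rceil\}$ (the four zigzag turning points per agent per sub-interval). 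Enumerate all of these finitely many descriptions, test each to see whether it patrols $[0, L]$ with idle time $T(1+\varepsilon')$, and let $L_{\mathrm{alg}}$ be the largest $n\xi$ for which some candidate succeeds. Finally, output the corresponding schedule rescaled by the factor $1/(1 + \varepsilon')$ in both space and time; this rescaling preserves all speed bounds and produces a schedule patrolling $[0, L_{\mathrm{alg}}/(1 + \varepsilon')]$ with idle time $T$.

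For correctness, any $L = n \xi < L^\ast$ is patrollable with idle time $T$; Lemma~\ref{lemma: zigzag schedule} then turns such a schedule into a $(\xi, \tau')$-zigzag with idle time $T(1 + \varepsilon')$, and Lemma~\ref{lemma: periodic} makes it periodic. Hence $L_{\mathrm{alg}} \geq L^\ast - 2\xi$ (it suffices to take the largest multiple of $\xi$ strictly below $L^\ast$), so the output patrols a fence of length $L_{\mathrm{alg}}/(1 + \varepsilon') \geq (L^\ast - 2\xi)(1 - \varepsilon/2) \geq (1 - \varepsilon) L^\ast$ by the choices of $\xi$ and $\varepsilon'$. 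The main obstacle is purely bookkeeping: composing the idle-time inflation from Lemma~\ref{lemma: zigzag schedule}, the spatial discretization loss of at most $\xi$, and the final rescaling so that their combined slack fits inside the target factor $1 - \varepsilon$. Everything else --- finiteness of the search space, termination, and the existence of a nearly optimal zigzag schedule --- is delivered directly by the three lemmas, and the resulting (astronomically slow but terminating) algorithm satisfies the theorem.
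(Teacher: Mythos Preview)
Your proposal is correct and follows essentially the same approach as the paper: reduce via Lemmas~\ref{lemma: zigzag schedule} and~\ref{lemma: periodic} to a finite search over periodic zigzag schedules, then pick the best one. You are simply more explicit about details the paper glosses over---discretizing the candidate fence lengths to multiples of $\xi$, bounding the period, and rescaling at the end to convert the inflated idle time $T(1+\varepsilon')$ back to $T$---and your bookkeeping (the choice $\xi \leq \varepsilon(v_1+\cdots+v_k)T/8$ combined with $1/(1+\varepsilon') \geq 1-\varepsilon/2$) checks out.
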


In previous work~\cite{esa2011, isaac},
a schedule was defined as functions on
the half\textcompwordmark line $[0, +\infty)$ (instead of $\Rset$)
and the requirement for patrolling was that
each location be visited in every length-$T$
time interval contained in this half\textcompwordmark line.
Note that the argument for Lemmas \ref{lemma: zigzag schedule} and \ref{lemma: periodic} in this section stays valid
when we start with a patrolling schedule on $[0, +\infty)$.
In particular,
a patrolling schedule on $[0, +\infty)$ can be
converted to a (periodic) schedule on $\Rset$
without essentially worsening the idle time.
Therefore, our slight deviation in the definition does not affect the ratio bounds
(discussed in Section~\ref{subsection: ratio 4/3}).

\section{Circle patrolling}
\label{section: circle}

As mentioned in Section~\ref{subsection: circle patrolling introduction},
we conjecture that the Runners Strategy for the Circle Patrolling Problem
does not have a constant approximation ratio, i.e.,
that for an arbitrarily large $c$,
we can find a set of agents with maximum speeds $v _1$, \ldots, $v _k$
and their schedule that patrols a perimeter of length $c \max _r r v _r$.
By scaling, we can assume that $\max _r r v _r = 1$.
In this case, $v _i \leq 1 / i$ for each $i = 1$, \ldots , $k$.
Thus, we may and will henceforth assume that $v _i = 1 / i$.

\subsection{Constant gap families}
\label{subsection: constant gap families}

In this section, we argue that the above conjecture is
equivalent to a perhaps simpler statement about
what we call \emph{constant gap families} below.

Suppose that we want to patrol (with idle time $1$)
a circle with perimeter $c > 1$.
By definition, a successful patrolling schedule is one in which,
for each time $t \in \Rset$ and each point $x$ on the circle,
some agent visits $x$ during the time interval $[t, t + 1]$.
Now, instead of imposing this for \emph{every} $x$,
consider the same condition for (each $t$ and)
the specific point $x = ct \bmod c$:
we say that an agent \emph{covers} $t$ if it visits $ct \bmod c$
during the time interval $[t, t + 1]$.
In other words, we consider an imaginary particle
that moves along the circle at speed~$c$,
and say that $t$ is covered if
the point that the particle passes at time $t$
is then visited by an agent before the particle comes again
(Figure~\ref{figure: circle}).
\begin{figure}
\begin{center}
\includegraphics{./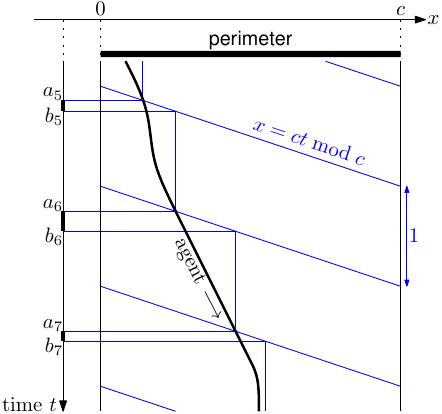}
\caption{%
  Time intervals $[a _j, b _j]$ that are covered by an agent.
  If the maximum speed of the agent is $v$ ($\leq 1$),
  the time $a _{j + 1} - a _j$ that it takes for it to fall one lap behind the particle (of speed~$c$)
  is at most $c / (c - v)$.
  Thus, $b _j - a _j = (a _{j + 1} - 1) - a _j \leq v / (c - v)$.
}
\label{figure: circle}
\end{center}
\end{figure}

For a schedule to patrol the circle,
every time $t \in \Rset$ must be covered by some agent.
This is a necessary, but not a sufficient, condition,
because for each point~$x$ on the circle,
we are requiring a visit only for some of the length-$1$ time intervals,
namely those that start and end when the particle passes $x$.
On the other hand,
this is a sufficient condition for the schedule
to patrol the circle with idle time $2$,
because every time interval of length~$2$
contains an interval of length~$1$ aligned with the particle's visits.
Thus,

\begin{lemma}
\label{lemma: circle covering}
Consider a schedule (of several agents) on a circle of perimeter $c > 1$.
\begin{enumerate}
\item
If the schedule patrols the circle with idle time $1$,
every time $t \in \Rset$ is covered by some agent.
\item
If every time $t \in \Rset$ is covered by some agent,
the schedule patrols the circle with idle time $2$.
\end{enumerate}
\end{lemma}

As shown in Figure~\ref{figure: circle},
the set of $t \in \Rset$ covered by an agent of maximum speed $v < c$
is a union of disjoint intervals $\bigcup _{j \in \Zset} [a _j, b _j]$
such that
\begin{align}
\label{equation: covered time in circle patrolling}
b _j - a _j & \leq \frac{v}{c-v},
&
a _{j + 1} - b _j & = 1
\end{align}
for all $j \in \Zset$.
Conversely,
for any family of intervals $([a _j, b _j]) _{j \in \Zset}$ satisfying
\eqref{equation: covered time in circle patrolling},
an agent with maximum speed~$v$
can cover all times in $\bigcup _{j \in \Zset} [a _j, b _j]$
by a movement in which,
for each $j \in \Zset$,
it is at point $c a _j \bmod c$ at time $a _j$.

Now, let us assume that
the maximum speed of agent~$i$ is $1 / i$,
as discussed at the beginning of Section~\ref{section: circle}.
Then the set $S _i \subseteq \Rset$ of times covered by agent $i$
is a union of intervals
satisfying \eqref{equation: covered time in circle patrolling} for $v = 1 / i$.
That is, it can be written as $
S _i = \bigcup _{j \in \Zset} [a_{i,j}, b_{i,j}]
$ with
\begin{align}
\label{equation: covered time in circle patrolling h}
b _{i, j} - a _{i, j} & \leq \frac{1}{c i - 1},
&
a _{i, j + 1} - b _{i, j} & = 1
\end{align}
for all $j \in \Zset$.
For a real number $c > 1$ and a positive integer $k$,
a $(c, k)$-\emph{constant gap family}
(henceforth a $(c, k)$-\emph{family})
is a $k$-tuple of sets $S _1, \ldots, S _k \subseteq \Rset$
satisfying $S _1 \cup \cdots \cup S _k = \Rset$
and \eqref{equation: covered time in circle patrolling h}.
Applying Lemma~\ref{lemma: circle covering}, we have:

\begin{lemma}
\label{lemma: circle}
Let $c > 1$.
\begin{enumerate}
\item
If $k$ agents with maximum speeds $1$, $1/2$, \ldots, $1/k$ can patrol
a circle of perimeter $c$ (with idle time $1$),
then there is a $(c, k)$-family.
\item
If there is a $(c, k)$-family,
then $k$ agents with maximum speeds $1$, $1/2$, \ldots , $1/k$ can patrol
a circle of perimeter $c$ with idle time $2$
(and thus can patrol a circle of perimeter $c / 2$ with idle time $1$).
\end{enumerate}
\end{lemma}

\subsection{A schedule patrolling a large circle}
\label{subsection: circle patrolling 1.05}

As we did for the partition-based strategy in interval patrolling,
we may want to ask how good the runners strategy is
in comparison to optimal schedules.
Our conjecture\footnote{%
  While this paper was under review,
  this conjecture was solved affirmatively~\cite{HKMPP19}.
} is the following:

\begin{conjecture}
\label{conjecture: circle unbounded}
The runners strategy does not have a constant approximation ratio.
That is, for any constant $c$,
there exist $v_1, \ldots , v_k$
and a schedule
that patrols a circle with perimeter $c \max _r rv_r$.
\end{conjecture}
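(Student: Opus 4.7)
The plan is first to invoke Lemma~\ref{lemma: circle} to reduce the conjecture to the purely combinatorial assertion that for every $c > 1$ there exist an integer $k$ and a $(c,k)$-sequence: the preceding theorem already covers $c = 2.1$, and every $(2C, k)$-sequence translates, via the agents of speeds $1, 1/2, \ldots, 1/k$, to a schedule of ratio $C$ over the runners strategy. Brute-force search like that used for $(2.1, 122)$ is essentially hopeless beyond this scale, since the state space grows (at least) exponentially in the common refinement period $\xi^{-1}$ one must choose. I would therefore aim for a bootstrapping lemma: from any $(c,k)$-sequence produce a $(c + \eta(c), k')$-sequence with $\eta(c) > 0$, then iterate from the base case above.

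The candidate boost is to take a base $(c,k)$-sequence $(S_1, \ldots, S_k)$, use it (after some relabelling) as the new $S'_1, \ldots, S'_k$, and patch the ``slivers'' uncovered because of the tighter length bound $1/(c'i - 1) < 1/(ci - 1)$ using additional slow sets $S'_{k+1}, \ldots, S'_{k'}$. Since the length budget $1/(c'i-1)$ decays only like $1/(c'i)$, the divergence of the harmonic series leaves ample density in principle, so the real question is one of alignment rather than total measure. A cleaner variant is to forgo iteration and design the base sequence so that its uncovered gaps form a self-similar pattern, which can then be recognized (rescaled) as covered by another $(c,k)$-sequence; this would make the whole construction self-referential rather than inductive.

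The hard part --- and the reason the conjecture is open --- is that condition~3 in the definition of $(c,k)$-sequences pins the distance between consecutive intervals of each $S_i$ to be \emph{exactly} $1$. This rigidity means that any rescaling of the base sequence destroys the unit spacing, while keeping it unscaled forces the new slow agents $S'_{k+1}, \ldots, S'_{k'}$ to align with a single fixed period $1$ grid shared with the old ones, severely restricting where their intervals can sit. Overcoming this likely requires a substantially richer family than has been tried so far: in particular, admitting $S_i$ whose interval lengths $b_{i,j} - a_{i,j}$ truly vary with $j$ (the computer search above, and all constructions one naturally writes, use a single length per $i$), and exploiting that extra freedom to make the gaps left by $S'_1, \ldots, S'_k$ land precisely at positions reachable by a period-$1$ slow agent.

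A more tractable intermediate target I would attack first is to show that the supremum of $c$ for which a $(c,k)$-sequence exists grows at all with $k$ --- say, like $\log \log k$, or indeed like any unbounded function. By Lemma~\ref{lemma: circle} this weaker statement already refutes the runners strategy as a constant-factor approximation and would constitute the conjecture. It suggests looking not for explicit constructions at all, but for a non-constructive argument: for example, a probabilistic placement of the intervals of $S_i$ within a sufficiently fine periodic grid, where the expected uncovered measure after $k$ rounds of a greedy-random selection can be shown to tend to zero faster than the length budget tightens.
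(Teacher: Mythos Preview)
The statement you are attempting is a \emph{conjecture}: the paper explicitly leaves it open and offers no proof. There is therefore nothing to compare your attempt against; the paper's contribution on this point is limited to the reduction (via Lemma~\ref{lemma: circle}) to the existence of $(c,k)$-sequences for all $c$, and the computer-found $(2.1,122)$-sequence establishing the case $c \le 2.1$.

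Your submission is not a proof but a research outline, and you say as much (``the reason the conjecture is open''). The reduction you begin with is exactly the one the paper states immediately after Lemma~\ref{lemma: circle}, so that part is correct. Your bootstrapping idea---promoting a $(c,k)$-sequence to a $(c+\eta,k')$-sequence by shrinking intervals and patching the residual gaps with further slow sets---is a natural first thought, and you correctly identify the obstruction: condition~3 fixes the inter-interval gap at exactly~$1$, so any rescaling is forbidden and the new sets must live on the same unit grid as the old ones. What you do not supply is any mechanism that overcomes this rigidity; the suggestions (allow $b_{i,j}-a_{i,j}$ to vary with $j$, try a probabilistic placement on a fine periodic grid) are plausible directions but come with no argument that they succeed. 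In particular, the probabilistic route faces the same difficulty in disguise: once the phase of $S_i$ is chosen, the positions of \emph{all} its intervals are determined up to their lengths, so there is essentially one random choice per~$i$, not one per interval, and a naive union bound does not close.

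In short: nothing here is wrong, but nothing here is a proof either. If you can make the bootstrapping step work even once---produce, from the existing $(2.1,122)$-sequence, a $(2.1+\eta,k')$-sequence for some explicit $\eta>0$ by a method that does not reduce to brute search---that would already be progress beyond the paper.
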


By Lemma~\ref{lemma: circle},
this holds if and only if
for each constant $c$,
there are $k$ and a $(c, k)$-family.
Note that this is non-trivial already for $c > 1$.

We used a computer program and found a
$(2.1, 122)$-family $(S _1, \dots, S _{122})$,
which we posted in the ancillary files
of the arXiv version of this paper.
We were able to finitely describe the whole family,
because for each $i \in \{1, \ldots, 122\}$,
the set $S _i$ has a period of $500$
(i.e., for each time $t \in \Rset$, we have $t \in S _i$ if and only if $t + 500 \in S _i$),
and the endpoints of all intervals of $S _i$ are multiples of $1 / 400$.
This also implies that we can check the validity (i.e., $S _1 \cup \cdots \cup S _k = \Rset$ and the condition~\eqref{equation: covered time in circle patrolling h})
of this example in a straightforward way using a computer program.
Thus, by Lemma~\ref{lemma: circle},

\begin{theorem}
\label{theorem: 1.05}
There exist $v_1, \ldots , v_k$ and a schedule
that patrols a circle with perimeter $1.05 \max_r rv_r$.
\end{theorem}

\section{Point patrolling}
\label{section: point}

We discuss the Point Patrolling Problem
introduced in Section~\ref{subsection: point patrolling introduction}.

\subsection{Point patrolling in discrete time}
\label{subsection: discretized point patrolling}

We start by observing that this problem
can be reduced to a discretized version:

\begin{DiscretizedPointPatrollingProblem}
There are $k$ agents.
We are given positive integers $m _1$, \ldots, $m _k$.
The gap between two distinct visits by agent~$i$ must be at least $m _i$.
Determine whether there is a schedule of visits in which,
at each integer time,
at least one agent makes a visit.
\end{DiscretizedPointPatrollingProblem}

The list of integers $m = (m _1, \ldots, m _k)$ is called \emph{good}
if it admits a patrolling schedule for this problem,
and otherwise \emph{bad}.

\begin{theorem}
\label{theorem: discretizing point patrolling}
Agents with minimum gaps $a _1$, \ldots, $a _k > 0$
can patrol the point with idle time $T$
in the (non-discretized) Point Patrolling Problem
(see Section~\ref{subsection: point patrolling introduction})
if and only if $(\lceil a_1 / T \rceil, \ldots, \lceil a_k / T \rceil)$ is good.
\end{theorem}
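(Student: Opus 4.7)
The easy direction ($\Leftarrow$) is a direct rescaling: given a good discrete schedule $(T_1,\ldots,T_k)$ for the integer intervals $(\lceil a_1/T\rceil,\ldots,\lceil a_k/T\rceil)$, define $S_i := T\cdot T_i = \{Tn : n \in T_i\}$. The minimum gap within $S_i$ is $T\lceil a_i/T\rceil\geq a_i$, and every half-open interval $[t,t+T)$ of length $T$ contains some multiple of $T$; that integer multiple lies in one of the $T_i$ by goodness of the discrete schedule, and hence is visited by $S_i$.

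For the hard direction ($\Rightarrow$) I rescale so that $T=1$. Given a continuous schedule with agent-$i$ gaps at least $a_i$ and idle time at most $1$, I construct a good discrete schedule for $(\lceil a_1\rceil,\ldots,\lceil a_k\rceil)$. The natural attempt is: for each integer $n$, use the idle-time condition to pick a visit $t_n\in[n,n+1)$ together with the agent $i_n$ making it, then set $T_i := \{n : i_n = i\}$, so that $T_1\cup\cdots\cup T_k = \Zset$ automatically. The remaining condition -- that consecutive elements of each $T_i$ differ by at least $\lceil a_i\rceil$ -- follows directly from the chain $t_n\geq n$, $t_{n'}<n'+1$, and $t_{n'}-t_n\geq a_i$, which gives $n'-n>a_i-1$ and hence $n'-n\geq\lceil a_i\rceil$ whenever $a_i$ is an integer.

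The delicate case, and the main obstacle of the proof, is non-integer $a_i$, for which the same chain yields only $n'-n\geq\lceil a_i\rceil-1$ -- one short of the target. My plan is to handle this in two steps: (a) reduce to a periodic continuous schedule by an analogue of Lemma~\ref{lemma: periodic} adapted to point patrolling, turning the assignment into a finite combinatorial problem on one period; and (b) on this finite instance, refine the choices of $t_n$ by a Hall-type (alternating-path) exchange argument. Whenever the naive choice produces a bad pair $n<n'$ in some $T_i$ with $n'-n=\lfloor a_i\rfloor$, the visits $t_n$ and $t_{n'}$ are pinned into very narrow slivers of $[n,n+1)$ and $[n',n'+1)$, which forces the intermediate integers $n+1,\ldots,n'-1$ to be covered by other agents; rerouting one of $n,n'$ (iteratively, along a short augmenting path if needed) to one of those other agents removes the bad pair without creating a new one. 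Showing that such rearrangements always exist and terminate -- which is where the continuous achievability is genuinely used, providing the necessary surplus of alternative coverings -- is the technical heart of this direction and the main work of the proof.
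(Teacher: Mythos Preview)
Your backward direction is correct and matches the paper's exactly.

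For the forward direction, the paper does \emph{not} carry out anything like the plan you propose. It simply enumerates the continuous visits as $(k_i,t_i)_{i\in\Zset}$, replaces each~$t_i$ by $\lfloor t_i/T\rfloor$, and asserts in one line (``it is easy to see'') that the result is a good discrete schedule for $(\lceil a_1/T\rceil,\ldots,\lceil a_k/T\rceil)$. There is no periodisation step, no matching argument, and no separate treatment of non-integer~$a_i/T$.

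You have, in fact, put your finger on a difficulty that the paper's proof does not address. If agent~$i$ visits at times $t<t'$ with $t'-t=a_i$ and $a_i/T\notin\Zset$, the floored times can differ by only $\lceil a_i/T\rceil-1$: with $T=1$, $a_i=2.5$, $t=0.1$, $t'=2.6$ one gets floors $0$ and~$2$, violating the required gap of~$3$. So the paper's one-line argument, read literally, does not establish the forward direction; you are being more careful than the authors here, not less.

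That said, your proposed repair is only a plan, and parts of it are shaky. The periodisation you invoke is Lemma~\ref{lemma: periodic} for \emph{fence} patrolling; you would need to state and prove an analogue for point patrolling (plausible, but not free). More seriously, your ``very narrow slivers'' heuristic is wrong: when $n'-n=\lfloor a_i\rfloor$ and $t_{n'}-t_n\ge a_i$, one only gets $t_n\in[n,\,n+1-\{a_i\})$ and $t_{n'}\in[n'+\{a_i\},\,n'+1)$, slivers of width $1-\{a_i\}$, which is close to~$1$ when the fractional part $\{a_i\}$ is small. So the exchange/augmenting-path step cannot be driven by that pinning intuition, and you have not given an argument for why a rerouting always exists and terminates. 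In short: you have correctly diagnosed a gap the paper leaves open, but your sketch does not yet close it.
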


\begin{proof}
For the `if' direction,
suppose that $(\lceil a _1 / T \rceil, \ldots, \lceil a _k / T \rceil)$ is good.
That is, there is a schedule where at each time $j \in \Zset$,
the point is visited by an agent $i _j \in \{1, \ldots , k\}$.
Consider the schedule (for the non-discretized problem) where
the corresponding visit is made by the same agent $i _j$ but
now at time $j T$.
This schedule is feasible,
because the gap between two distinct visits by agent~$i$ is now
at least $\lceil a _i / T \rceil \cdot T \geq a _i$.

Conversely, suppose that the agents achieve the idle time of $T$
in the non-discretized problem.
Let $(i _j, t _j) _{j \in \Zset} \in (\{1, \ldots, k\} \times \Rset) ^\Zset$ be
the (ordered) list of all visits in this schedule.
That is, the $j$th visit is made by agent $i _j$ at time $t _j$, so that
\begin{itemize}
\item
  $0 \leq t _{j + 1} - t _j \leq T$ for all $j \in \Zset$; and
\item
  for all $j$, $j' \in \Zset$,
  if $i _{j} = i _{j'}= i$ and $j < j'$, then $t _{j'} - t _{j} \geq a _i$.
\end{itemize}
Now consider the schedule
in which the $j$th visit is made by the same agent $i _j$ but
now at time $j$.
This is a feasible schedule for the discretized problem, because,
if agent~$i$ makes the $j$th and then the $j'$th visit,
then $j' - j \geq (t _{j'} - t _{j}) / T \geq a _i / T$
by the two conditions above.
\end{proof}

Thus, we will henceforth be interested in the
Discrete Point Patrolling Problem.
We note that this problem can be solved in finite time
as follows.
Consider a directed graph with $\prod _{i = 1} ^k m _i$ vertices,
each of which corresponds to a $k$-tuple $(b _1, \ldots, b _k)$ with
$b _i \in \{0, 1, \dots, m _i - 1\}$ for each $i$.
This vertex means that
currently agent~$i$ still needs to wait for $b _i$ time
before its next visit
(i.e., time $m _i - b _i$ has elapsed since its most recent visit).
We add an edge from a vertex $(b _1, \ldots, b _k)$
to a vertex $(b' _1, \ldots , b' _k)$
when there is $i$ such that
$(b _i, b' _i) = (0, m _i - 1)$ and
$b' _j = \max \{0, b _j - 1\}$ for each $j \neq i$.
A successful patrolling schedule corresponds to an infinite walk in this graph.
Thus, $(m _1, \ldots , m _k)$ is good if and only if this graph has a cycle.

Using standard algorithms for strongly connected components, we can solve
the problem in $O (k \prod _{i=1}^{k} m _i)$ time.
We conjecture that this problem is not solvable in polynomial time in general.
As we will see in Section~\ref{section: complexity},
similar problems turn out to be $\classNP$-complete.

In fact, it is not clear whether this problem is in $\classNP$.
The above discussion of the directed graph
shows that if we can patrol a point with the given agents,
then we can do so by a periodic schedule.
But it only gives an exponential bound $\prod _{i = 1} ^k m _i$ on this period,
which is not polynomial in the input length
(even if each number $m _i$ is given in unary).

\subsection{Quantitative conditions for point patrolling}
\label{subsection: suminv}

\newcommand{\SumInv}{\mathit{SumInv}}

For a list $m = (m _1, \ldots, m _k)$ of positive integers,
define $\SumInv (m) = 1 / m _1 + \dots + 1 / m _k$. 
There is an obvious necessary condition for $a$ to be good:

\begin{theorem}
\label{theorem: bad}
If a list $m = (m _1, \dots, m _k)$ of positive integers is good,
then $\SumInv (m) \geq 1$. 
\end{theorem}

\begin{proof}
As discussed at the end of Section~\ref{subsection: discretized point patrolling},
there is a periodic patrolling schedule.
During each period, each agent $i$ can visit
at most $1 / m _i$ fraction of the time,
so the sum of these fractions over all agents must be $\geq 1$.
\end{proof}

The converse does not hold in general (for example, $(2, 3, 5)$ is bad, see Theorem~\ref{theorem: bad example}),
but it does in a special case:

\begin{lemma}
\label{lemma: SumInv sufficient}
If a list $m = (m _1, \dots, m _k)$ of positive integers satisfies $\SumInv (m) \geq 1$,
and $m _{i - 1}$ divides $m _i$ for all $i = 2$, \dots, $k$,
then $m$ is good.
\end{lemma}

\begin{proof}
We prove by induction on $i = 1$, \ldots, $k$ that
there is a set $R _i \subseteq \{1, \ldots , m _i\}$
of size $\lvert R _i \rvert = m _i \cdot \min \{\SumInv (m _1, \dots, m _i), 1\}$
and a schedule for the first $i$ agents $1$, \ldots, $i$
that visits all times in
$m _i \Zset + R _i := \{\, m _i n + r : n \in \Zset, \ r \in R _i \,\}$.
The lemma then follows from this claim for $i = k$.

The claim is trivial for $i = 1$ (with $R _1$ being any set of size $1$).
Suppose $i > 1$.
By the assumption, $m _i$ is a multiple of $m _{i - 1}$,
and the induction hypothesis
gives a set $R _{i - 1} \subseteq \{1, \ldots, m _{i - 1}\}$
of size $m _{i - 1} \cdot \min \{\SumInv (1, \ldots , m _{i - 1}), 1 \}$
such that
the first $i - 1$ agents visit all times in $m _{i - 1} \Zset + R _{i - 1}$.
Let $R _i$ be
the set of numbers in $\{1, \ldots , m _i\}$
congruent modulo $m _{i - 1}$ to some element of $R _{i - 1}$,
so that $\lvert R _i \rvert = (m _i / m _{i - 1}) \cdot \lvert R _{i - 1} \rvert$.
If $R _{i - 1} = \{1, \dots , m _{i - 1}\}$ (and thus $R _i = \{1, \dots , m _i\}$ and $m _{i - 1} \Zset + R _{i - 1} = m _i \Zset + R _i = \Zset$),
we are done,
without even using the additional agent~$i$.
Otherwise,
add any one element $r ^* \in \{1, \ldots, m _{i - 1}\} \setminus R _{i - 1}$
to $R _i$,
and let agent~$i$ visit the times in $m _i \Zset + \{r ^*\}$.
Now $\lvert R _i \rvert = m _i \cdot \SumInv (m _1, \ldots, m _{i - 1}) + 1 = m _i \cdot \SumInv (m _1, \ldots , m _i)$.
\end{proof}

A corollary to this lemma is the following:

\begin{theorem}
\label{theorem: good}
If a list $m = (m _1, \dots, m _k)$ of positive integers satisfies $\SumInv (m) \geq 2$,
then $m$ is good.
\end{theorem}

\begin{proof}
Since the ordering of the agents does not matter,
we may assume that $m _1 \leq \dots \leq m _k$.
For each $i = 1$, \ldots, $k$,
let $e _i$
be the smallest integer with $2 ^{e _i} > m _i$.
Since $2 ^{e _i} \leq 2 m _i$,
we have $
 \SumInv (2 ^{e _1}, \ldots, 2 ^{e _k})
\geq
 1 / (2 m _1) + \dots + 1 / (2 m _k)
\geq
 1
$, and thus we can apply Lemma~\ref{lemma: SumInv sufficient}
to conclude that
$(2 ^{e _1}, \ldots , 2 ^{e _k})$ is good,
and hence so is $m$.
\end{proof}

In fact, the same argument proves something slightly stronger:
\emph{if $m _1$, \ldots, $m _k$ are positive real numbers with
$1 / m _1 + \dots + 1 / m _k \geq 2$,
then $m' = (\lfloor m _1 \rfloor + 1, \ldots, \lfloor m _k \rfloor + 1)$ is good}.
This is obvious when $m _i < 1$ for some $i = 1$, \ldots, $k$.
Otherwise, each $e _i$ satisfies $2 ^{e _i} \geq \lfloor m _i \rfloor + 1$,
and thus the conclusion holds for $m'$ instead of $m$.

This (together with Theorem~\ref{theorem: bad})
gives a 2-approximation algorithm for the
(non-discretized) Point Patrolling Problem:
Given the minimum gaps $a _1$, \ldots, $a _k > 0$ of agents,
let $T$ be the number satisfying $T / a _1 + \dots + T / a _k = 1$.
Then $(\lfloor a _1 / (2 T) \rfloor + 1, \ldots, \allowbreak \lfloor a _k / (2 T) \rfloor + 1)$ is good
by the aforementioned stronger version of Theorem~\ref{theorem: good} (with $m _i = a _i / (2 T)$),
and hence, by one direction of
Theorem~\ref{theorem: discretizing point patrolling},
we obtain a patrolling schedule with idle time $2T$.
To see that this is indeed a $2$-approximation,
i.e., that idle time $T' < T$ is unachievable,
we invoke the other direction of Theorem~\ref{theorem: discretizing point patrolling}
with $(\lceil a _1 / T' \rceil, \ldots, \allowbreak \lceil a _k / T' \rceil)$,
which is bad by Theorem~\ref{theorem: bad} because $
 \SumInv (\lceil a _1 / T' \rceil, \ldots, \allowbreak \lceil a _k / T' \rceil)
<
 T / a _1 + \dots + T / a _k
=
 1
$.

For the rest of this section,
we will be interested in improving
the constant $2$ in Theorem~\ref{theorem: good}.
The following lemma shows that
from a bad list~$m$,
we can obtain another bad list~$m'$
whose entries are smaller
but whose $\SumInv$ is not much smaller:

\begin{lemma}
\label{lemma: half}
Let $Q > 0$ and $t \geq 0$ be integers.
For a bad list $m$ of positive integers $\leq 2 ^t Q$,
there exists a bad list $m'$ of positive integers $\leq Q$
such that
\begin{equation*}
  \SumInv (m') + 1 \geq \biggl( 1 - \frac{2 (1 - 2 ^{-t})}{Q + 2} \biggr) \cdot \bigl( \SumInv (m) + 1 \bigr).
\end{equation*}
\end{lemma}

\begin{proof}
Once the lemma is proved for $t = 1$,
we can apply this special statement $t$ times with
$Q$ replaced by $2 ^i Q$ for $i = 0$, $1$, \ldots, $t - 1$
to obtain the general statement,
because
\begin{equation*}
 1 - \frac{2 (1 - 2 ^{-t})}{Q + 2}
=
 \prod _{i = 0} ^{t - 1} \biggl( 1 - \frac{2 (1 - 2 ^{-1})}{2 ^i Q + 2} \biggr)
\end{equation*}
by a straightforward induction on $t$.

Thus, we shall henceforth assume that $t = 1$.
Let $r$ and $s$ be the numbers of elements of $m$ that are $\leq Q$ and $> Q$,
respectively.
Without loss of generality,
we can assume that $m$ is in nondecreasing order, so that
\begin{equation*}
  m = (e _1, \ldots, e _r, a _1, a _2, \ldots, a _s),
\end{equation*}
where $e _1 \leq \dots \leq e _r \leq Q < a _1 \leq a _2 \leq \dots \leq a _s$.
First, define
\begin{equation*}
  m'' = (e _1, \ldots, e _r, a _2, a _2, a _4, a _4, \ldots , a _{s - p / 2}, a _{s - p / 2}),
\end{equation*}
where $p \in \{0, 1\}$ is the parity of $s$.
Thus, $m''$ is constructed from $m$ by
replacing every other agent with minimum gap $> Q$
by one with the same minimum gap as the next agent,
and possibly removing the last agent.
Since $m$ is bad, so is $m''$.
The inverse sum of $m''$ can be smaller than that of $m$,
but only slightly:
\begin{align*}
  \SumInv (m) - \SumInv (m'')
&
 =
  \sum _{j = 1} ^{(s - p) / 2} \biggl( \frac{1}{a _{2 j - 1}} - \frac{1}{a _{2 j}} \biggr) + p \cdot \frac{1}{a _s}
\\
&
 \leq
  \sum _{i = 2} ^s \biggl( \frac{1}{a _{i - 1}} - \frac{1}{a _i} \biggr) + \frac{1}{a _s}
 =
  \frac{1}{a _1} 
 \leq
  \frac{1}{Q + 1}. 
\end{align*}
Next, define
\begin{equation*}
  m' = (e _1, \ldots, e _r, b _1, \ldots, b _{(s - p) / 2}),
\end{equation*}
where $
b _j = \lceil a _{2 j} / 2 \rceil
$ for $j = 1$, \ldots, $(s - p) / 2$.
Two agents with minimum gap $a _{2 j}$ can work as
a single agent with minimum gap $b _j$,
so $m'$ is also bad.
Since this rounding $\lceil \mathord\cdot \rceil$ increases each element by a factor of $\leq \frac{Q+2}{Q+1}$,
we have
\begin{equation*}
  \SumInv (m')
 \geq
  \frac{Q+1}{Q+2} \cdot \SumInv (m'')
 \geq
  \frac{Q+1}{Q+2} \biggl( \SumInv (m) - \frac{1}{Q + 1} \biggr), 
\end{equation*}
whence $\SumInv (m') + 1 \geq \frac{Q + 1}{Q + 2} \cdot (\SumInv (m) + 1)$, as was desired.
\end{proof}

\begin{theorem}
\label{theorem: 1.546}
If a list $m$ of positive integers satisfies
$\SumInv (m) > 1.546$,
then $m$ is good.
\end{theorem}

\begin{proof}
Suppose that there is a bad $m$ with $\SumInv (m) > 1.546$.
Using Lemma~\ref{lemma: half} for $Q = 12$ (and a sufficiently large $t$),
we obtain a bad list $m'$ of integers $\leq 12$
such that $\SumInv (m') + 1 > (1 - 2 / 14) \cdot (1.546 + 1)$,
whence $\SumInv (m') > 1.1822$.
By sorting and truncating $m'$ if necessary,
we may assume that $m' = (m' _1, \ldots , m' _l)$ satisfies
\begin{align*}
  0 < m' _1 \leq \dots \leq m' _l \leq 12, & &
  \frac{1}{m' _1} + \dots + \frac{1}{m' _{l - 1}} \leq 1.1822 < \frac{1}{m' _1} + \dots + \frac{1}{m' _l}.
\end{align*}
There are only finitely many lists of integers $m'$
that satisfy these inequalities.
Using a computer program in the arXiv version of this paper,
we verified that all of them are good,
due to either the
sufficient condition in Lemma~\ref{lemma: SumInv sufficient}
or the exhaustive searching mentioned at the end of Section~\ref{subsection: discretized point patrolling}.
This is a contradiction.
\end{proof}

The constant in Theorem~\ref{theorem: 1.546} cannot be made smaller than
$\sum _{i = 0} ^\infty 1 / (2 ^i + 1) = 1.264\ldots {}$:

\begin{theorem}
\label{theorem: bad example}
$(2, 3, 5, \ldots , 2^k+1)$ is bad.
\end{theorem}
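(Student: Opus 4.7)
I plan to prove the theorem by induction on $k$. The base case $k = 0$ is immediate from Theorem~\ref{theorem: bad}, since $\sum 1/a_i = 1/2 < 1$. For the inductive step I will assume $(2, 3, 5, \ldots, 2^{k-1}+1)$ is bad and suppose for contradiction that $(2, 3, 5, \ldots, 2^k+1)$ admits a good schedule $S_0, S_1, \ldots, S_k$, taken periodic by the standard finite-state-space compactness argument.

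The natural reduction uses the halving map $\psi(2m+1) = m$ on the odd integers: I set $T_i := \{m \in \Zset : 2m+1 \in S_{i+1}\}$ for $i = 0, 1, \ldots, k-1$. The gap constraint for $T_i$ follows because any two visits $2m+1, 2m'+1 \in S_{i+1}$ with $m < m'$ satisfy $2(m'-m) \geq a_{i+1} = 2^{i+1}+1$, and since the left side is even while the right side is odd, this strengthens to $2(m'-m) \geq 2^{i+1}+2$, i.e., $m'-m \geq 2^i+1$---exactly the interval required by the $i$-th agent of the smaller problem. If additionally every odd integer is visited by some $S_{i+1}$ with $i \geq 0$, then $\bigcup_i T_i = \Zset$ and $(T_0, \ldots, T_{k-1})$ is a good schedule for $(2, 3, 5, \ldots, 2^{k-1}+1)$, contradicting the inductive hypothesis.

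To arrange this covering condition I would first enlarge $S_0$ as much as possible within the gap-$2$ constraint, i.e., make $S_0$ \emph{maximal}. Every gap of a maximal $S_0$ is then $2$ or $3$; when all gaps equal $2$, $S_0$ is a single residue class modulo $2$, and by translating the whole schedule I may take $S_0 = 2\Zset$, so every odd integer lies outside $S_0$ and is covered by some $S_i$ with $i \geq 1$, and the $\psi$-reduction succeeds. The symmetric case $S_0 = 2\Zset+1$ is dispatched by the even-based variant $U_i = \{m : 2m \in S_{i+1}\}$.

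The main obstacle, which I expect to be the crux, is the residual case where $S_0$ has a gap of size $3$ and so genuinely mixes parities. Each gap of $3$ leaves two consecutive integers in the complement of $S_0$, and since every agent $i \geq 1$ has interval at least $3$, two distinct such agents must jointly cover each such pair on its two elements. I anticipate handling this either by a chain of local exchange arguments that straighten out the gap-$3$ patches by swapping odd elements of $S_0$ with neighboring even positions visited by some $S_j$ (propagating the perturbation along the period to close the chain), or by a counting argument showing that the forced coupling of two agents at each gap-$3$ position, combined with the limited total density budget $\sum_{i=1}^k 1/(2^i+1) < 0.764$ of the remaining agents, is too tight to sustain full coverage; in either case the analysis reduces to the already-handled parity-sorted situation.
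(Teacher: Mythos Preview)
Your halving reduction $T_i=\{m:2m+1\in S_{i+1}\}$ is a genuinely nice idea, and the parity trick that $2(m'-m)\ge 2^{i+1}+1$ forces $2(m'-m)\ge 2^{i+1}+2$ is exactly the right arithmetic. The problem is that you have not closed the case you yourself flag as ``the crux'': when a maximal $S_0$ has a gap of length~$3$, neither of the two routes you sketch is actually carried out, and as stated neither clearly works.

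The counting route does not go through on density alone: a maximal $S_0$ with gaps in $\{2,3\}$ leaves a complement of density anywhere in $[1/2,\,2/3]$, while $\sum_{i=1}^{k}1/(2^{i}+1)$ is below $0.765$ but still comfortably exceeds $2/3$, so there is no contradiction from totals. The local-exchange route (swap an odd element of $S_0$ with an adjacent even time visited by some $S_j$) has the usual cascading difficulty: moving $S_j$'s visit by one step may now violate $S_j$'s interval constraint with its \emph{next} visit, forcing a further swap, and you give no argument that this process terminates or that it does not wrap around the period inconsistently. Until one of these is made precise, the inductive step is incomplete, not merely sketchy.

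For comparison, the paper sidesteps all of this by strengthening the induction hypothesis to a finitary statement: in any block of $2^{k+1}$ consecutive integers, the agents $(2,3,\ldots,2^{k}+1)$ miss at least one. The step is then immediate: apply the hypothesis to each half of the block to see that the smaller agents $(2,\ldots,2^{k-1}+1)$ miss a point in each half, so the new agent with interval $2^{k}+1$ must hit the block at least twice; but between two consecutive visits of that agent lie at least $2^{k}$ consecutive integers untouched by it, and the hypothesis again gives a point there missed by the smaller agents. No normalization of $S_0$ is needed at all.
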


\begin{proof}
We prove a stronger claim by induction of $k$:
there is no schedule for
these agents (with minimum gaps $2$, $3$, $5$, \ldots , $2^k + 1$)
that visits all of the $2 ^{k + 1}$ consecutive integers,
say $1$, \ldots, $2^{k+1}$.

When $k = 0$, this is trivial.
Suppose that the claim holds for $k = t - 1$.
Assume for the sake of contradiction that there is
a schedule for $k = t$
(i.e., for the $t + 1$ agents with minimum gaps $2$, $3$, $5$, \ldots , $2 ^t + 1$)
that visits $1$, \ldots , $2 ^{t + 1}$.
Let $p \in \{1, \ldots , 2 ^{t + 1}\}$ be the first integer
that is visited by the agent with minimum gap $2 ^t + 1$.
Then
\begin{itemize}
\item
  if $p > 2 ^t$,
  the other $t$ agents (with minimum gaps $2$, $3$, $5$, \ldots , $2 ^{t -1} + 1$)
  must visit integers $1$, \ldots, $2 ^t$;
\item
  otherwise,
  these $t$ agents
  must visit integers $p + 1$, \ldots , $p + 2 ^t$.
\end{itemize}
In both cases, the induction hypothesis implies that there is no such schedule.
\end{proof}

\begin{conjecture}
If $m$ is a list of positive integers satisfying $\SumInv (m) > \alpha := \sum _{i = 0} ^\infty 1/(2^i+1) \approx 1.264$, then $m$ is good.
\end{conjecture}

\section{Complexity of problems related to point patrolling}
\label{section: complexity}

We have discussed approximation algorithms for patrolling problems.
Finding exactly optimal solutions does not seem to be easy,
and in particular, we conjecture that the
Discretized Point Patrolling Problem is not pseudo-polynomial-time decidable
(i.e., is not polynomial-time decidable even when the input integers $m _1$, \ldots, $m _k$ are written in unary).
However, we have not been able to show any non-trivial hardness result for this problem.
In this section, we prove $\classNP$-completeness of two related problems.

\subsection{Disjoint residue classes}

Consider the special case of the Discretized Point Patrolling Problem
where the input $m$ satisfies $\SumInv (m) = 1$
(see Section~\ref{subsection: suminv} for $\SumInv$).
If $m$ is good in this case,
a periodic schedule for it (which exists, as shown at the end of
Section~\ref{subsection: discretized point patrolling})
must use each agent~$i$ exactly $1 / m _i$ fraction of the time,
and hence, the set of times at which $i$ makes a visit must be
of the form
$m _i \Zset + r _i = \{\, m _i n + r _i : n \in \Zset \,\}$
for some $r _i \in \Zset$.

Thus, this special case
is equivalent to the following problem.
For positive integers $m _1$, \ldots, $m _k$ and
integers $r _1$, \ldots, $r _k$,
the set $\{(m _1, r _1), \ldots , (m _k, r _k)\}$
is called a \emph{disjoint covering system}~\cite{covering} if
the $k$ sets $m _1 \Zset + r _1$, \ldots, $m _k \Zset + r _k$
are a partition of $\Zset$.

\begin{DisjointCoveringSystemProblem}
We are given a list $(m _1, \dots, m _k)$ of positive integers.
Determine whether there are integers $r _1$, \ldots, $r _k$
such that $\{(m _1, r _1), \ldots , (m _k, r _k)\}$ is a disjoint covering system.
\end{DisjointCoveringSystemProblem}

Since it is easy to tell whether a given $m = (m _1, \dots, m _k)$ satisfies
$\SumInv (m) = 1$,
this problem is equivalent to
the above special case of the Discretized Point Patrolling Problem.

By the Chinese remainder theorem,
the sets $m _i \Zset + r _i$ and $m _j \Zset + r _j$ are disjoint if and only if
\begin{equation}
\label{equation: disjointness}
  r _i \not \equiv r _j \pmod{\gcd (m _i, m _j)},
\end{equation}
where $\gcd (m _i, m _j)$ is
the greatest common divisor of $m _i$ and $m _j$.
Since this condition~\eqref{equation: disjointness}
can be checked easily (for all pairs $(i, j)$),
the Disjoint Covering System Problem belongs to $\classNP$.
We conjecture that
it is $\classNP$-complete:

\begin{conjecture}
The Disjoint Covering System Problem is $\classNP$-complete,
even if the inputs $m _1$, \ldots, $m _k$ are written in unary.
\end{conjecture}

Failing to prove this, we consider $\classNP$-completeness of
a similar problem.
For positive integers $m _1$, \ldots, $m _k$ and
integers $r _1$, \ldots, $r _k$,
the set $\{(m _1, r _1), \ldots, (m _k, r _k)\}$
is called a \emph{disjoint residue class}~\cite{residue} if
the $k$ sets $m _1 \Zset + r _1$, \ldots, $m _k \Zset + r _k$ are
pairwise disjoint.
Thus, this time the sets need not cover all integers.

\begin{DisjointResidueClassProblem}
We are given a list $(m _1, \dots, m _k)$ of positive integers.
Determine whether there are integers $r _1$, \ldots, $r _k$
such that $\{(m _1, r _1), \ldots, (m _k, r _k)\}$ is a disjoint residue class.
\end{DisjointResidueClassProblem}

\begin{theorem}
The Disjoint Residue Class Problem is $\classNP$-complete,
even if the inputs $m _1$, \ldots, $m _k$ are written in unary.
\end{theorem}

\begin{proof}
The vertex cover problem for triangle-free graphs is known to be $\classNP$-complete.
We reduce this problem to the Disjoint Residue Class Problem.
Let $G = (V, E)$ be a triangle-free graph
with $n = \lvert V \rvert $ vertices
and $k = \lvert E \rvert $ edges,
and let $s \leq n$ be a positive integer.
To the $n$ vertices $u \in V$ of $G$,
we assign distinct prime numbers $p _u$, all greater than $n$.
We rename the edges so that $E = \{1, \ldots, k\}$,
and for each edge $i \in E$ with endpoints $u$, $v \in V$,
we let $m _i = s p _u p _v$.
We claim that $G$ has a vertex cover of size $s$ if and only if
$(m _i) _{i \in E}$ is a yes-instance of the Disjoint Residue Class Problem.

Suppose that $\{v _1, \ldots, v _s\} \subseteq V$ is a vertex cover of size $s$.
That is, there is a mapping $a \colon E \to \{1, \ldots, s\}$
such that each edge $i \in E$ is incident to $v _{a (i)}$.
Since less than $n$ edges $i$ are mapped to the same $a (i)$,
there is a mapping $b \colon E \to \{1, \ldots , n\}$ such that
$b (i) \neq b (j)$ for all distinct $i$, $j \in E$ with $a (i) = b (i)$.
For each $i \in E$,
let $r _i$ be such that $r _i \equiv a (i) \pmod s$ and $r _i \equiv b (i) \pmod{p _{v _{a (i)}}}$.
Then $\{\, (m _i, r _i) : i \in E \,\}$ is a disjoint residue class,
because
the condition~\eqref{equation: disjointness}
holds for all distinct $i$, $j \in E$ as follows:
\begin{itemize}
\item If $a (i) = a (j)$ (and hence $b (i) \neq b (j)$),
  then $p _{v _{a (i)}}$ is a common divisor of $m _i$ and $m _j$,
  and $r _i \equiv b (i) \not \equiv b (j) \equiv r _j \pmod{p _{v _{a (i)}}}$.
\item Otherwise,
  $s$ is a common divisor of $m _i$ and $m _j$,
  and $r _i \equiv a (i) \not \equiv a (j) \equiv r _j \pmod s$.
\end{itemize}

Conversely, let $\{\, (m _i, r _i) : i \in E \,\}$ be a disjoint residue class.
For each $a = 1$, \ldots, $s$,
let $E _a = \{\, i \in E : r _i \equiv a \pmod s \,\}$.
Then every pair of edges $i$, $j$ in $E _a$ must share a vertex,
because otherwise
$\gcd (m _i, m _j) = s$ and
the condition~\eqref{equation: disjointness} is violated.
Since $G$ is triangle-free,
there must exist a vertex $v _a$ shared by all edges in $E _a$.
Thus, $\{v _1, \ldots , v _s\}$ is a vertex cover.
\end{proof}

\subsection{Patrolling a specified set of times}

We also obtain an $\classNP$-complete problem if we specify the set of times at which the point must be visited:

\begin{GeneralizedPointPatrollingProblem}
We are given a finite set of times $S \subseteq \Zset$
and a list $(m _1, \ldots, m _k)$ of positive integers.
The gap between two distinct visits by agent~$i$
must be at least $m _i$.
Determine whether there is a schedule of visits in which,
at each time in $S$, at least one agent makes a visit.
\end{GeneralizedPointPatrollingProblem}

\begin{theorem}
\label{theorem: generalized point patrolling np-complete}
The Generalized Point Patrolling Problem is $\classNP$-complete,
even if the inputs $m _1$, \ldots, $m _k$ are written in unary.
\end{theorem}

\begin{proof}
In \emph{Numerical 3-Dimensional Matching},
we are given three sequences
$(x _1, \ldots, x _k)$, $(y _1, \ldots, y _k)$, $(z _1, \ldots, z _k)$
of positive integers written in unary,
and we are asked whether there are
permutations $(p _1, \ldots, p _k)$ and $(q _1, \ldots, q _k)$
of $(1, \ldots, k)$
such that
\begin{align}
\label{equation: 3DM yes}
  x _{p _i} + y _{q _i} + z _i & = b, & i & = 1, \ldots , k,
\end{align}
where $b = (x _1 + \dots + x _k + y _1 + \dots + y _k + z _1 + \dots + z _k) / k$.
We reduce this problem, which is known to be
$\classNP$-complete~\cite{book},
to the Generalized Point Patrolling Problem as follows:
given the above instance,
we construct the input  $(S, m _1, \ldots , m _k)$ by
choosing a sufficiently large number $M$
(say, $M = 3 k (b + \max \{x _1, \ldots, x _k, y _1, \ldots, y _k\})$)
and setting 
\begin{align*}
\label{equation: m_i for 3DM}
X _i & = 3 k x _i - i, \qquad Y _i = M - 3 k y _i + i, \qquad
m _i = M - 3 k (b - z _i), & i & = 1, \ldots, k, 
\end{align*}
and $S = \{X _1, \ldots, X _k, Y _1, \ldots, Y _k\}$. 
This can clearly be done in polynomial time (even when we need to write $m _i$ in unary).
Note that for each triple $(p, q, i) \in \{1, \ldots, k\} ^3$, 
we have $x _p + y _q + z _i \leq b$ if and only if 
$Y _q - X _p \geq m _i$. 

Thus, if we have \eqref{equation: 3DM yes}, 
there is a patrolling schedule for $(S, m _1, \ldots, m _k)$ where
each agent~$i$ makes visits at $X_{p_{i}}$ and $Y_{q_{i}}$.

Conversely, suppose that there is
a patrolling schedule for $(S, m _1, \ldots, m _k)$.
Since $M$ is (and hence $m _i$ are) sufficiently large, 
no agent can visit more than one of ${X_1, \ldots, X_k}$ or more than one of ${Y_1, \ldots, Y_k}$. 
Since the $k$ agents visit the $2 k$ times in $S$,
each agent~$i$ visits one of ${X_1, \ldots, X_k}$ and one of ${Y_1, \ldots, Y_k}$, 
say $X _{p _{i}}$ and $Y _{q _{i}}$.
Since $Y _{q _i} - X _{p _i} \geq m _i$,
we have $x_{p_i} + y_{q_i} + z_i \leq b$. 
Since this holds for all~$i$, we have \eqref{equation: 3DM yes}.
\end{proof}

\end{document}